\def\bel{\begin{linenomath}}
\def\eel{\end{linenomath}}
\def\be{\begin{equation}}
\def\ee{\end{equation}}
\def\ben{\begin{eqnarray}}
\def\een{\end{eqnarray}}
\def\ds{\displaystyle}
\def\expn{e^{\imath \frac{\pi(n-1)}{M}}}
\def\D{\mathcal{D}}
\def\R{\mathbb{R}}
\def\N{\mathbb{N}}
\def\V{\mathbb{V}}
\def\C{\mathbb{C}}
\def\B{\mathcal{B}}
\def\Re{\operatorname{Re}}
\def\Im{\operatorname{Im}}
\def\im{\operatorname{\imath}}
\def\jdqd{\jd{\cqd}}
\def\qs{{q^\star}}
\def\qd{{q^\diamond}}
\def\jd{j^\diamond}
\def\jdq{\jd\cq}
\def\cq{\{q\}}
\def\cs{\{s\}}
\def\cqs{\{\qs\}}
\def\cqd{\{\qd\}}
\def\til{\tilde}
\def\wtil{\widetilde}
\def\vP{\mathbf{p}}
\def\vt{\mathbf{t}}
\def\vR{\mathbf{r}}
\def\vRq{{\vR}{\cq}}
\def\vd{\mathbf{d}}
\def\vf{\mathbf{f}}
\def\vfa{\vf^{\rm{a}}}
\def\vfad{\vf^{\rm{a}}_\downarrow}
\def\vfaq{\vfa{\cq}}
\def\vfq{{\vf}{\cq}}
\def\sq{s\cq}
\def\ellq{{\ell}{\cq}}
\def\vg{\mathbf{g}}
\def\vy{\mathbf{y}}
\def\vft{\til{\mathbf{f}}}
\def\vgt{\til{\mathbf{g}}}
\def\vB{\mathbf{b}}
\def\vBq{{\vB}{\cq}}
\def\vW{\mathbf{w}}
\def\vWt{\mathbf{\wtil{w}}}
\def\vWq{{\vW}{\cq}}
\def\vWtq{{\vWt}{\cq}}
\def\vRe{\mathbf{r}}
\def\vd{\mathbf{d}}
\def\op{\hat{\operatorname{P}}}
\def\Join{\hat{\operatorname{J}}}
\def\kq{k(q)}
\def\kqs{k(\qs)}
\def\Nq{N_{\rm{b}}}
\def\Ns{N_{\rm{s}}}
\newcommand{\La}{ \left \langle}
\newcommand{\ra}{\Ra}
\newcommand{\la}{\La}
\newcommand{\Ra}{ \right \rangle}
\newcommand{\argmax}{\operatorname*{arg\,max}}
\newcommand{\argmin}{\operatorname*{arg\,min}}
\newtheorem{proposition}{Proposition}
\def\rc{\rm{c}}
\def\rs{\rm{s}}
\def\rcs{\rm{cs}}
\def\Dc{\D^{\rm{c}}}
\def\Ds{\D^{\rm{s}}}
\def\Dcs{\D^{\rm{cs}}}
\def\Bc{\B^{\rm{c}}}
\def\Bs{\B^{\rm{s}}}
\def\Bcs{\B^{\rm{cs}}}
\def\Dct{\D^{\rc 2}}
\def\Dst{\D^{\rs 2}}
\def\Dcst{\D^{\rcs 2}}
\def\Dcf{\D^{\rc 4}}
\def\Dsf{\D^{\rs 4}}
\def\Dcsf{\D^{\rcs 4}}
\newcommand{\Spann}{{\mbox{\rm{span}}}}
\title{Cooperative Greedy Pursuit 
Strategies for Sparse Signal Representation by Partitioning}
\author{Laura Rebollo-Neira\\
Mathematics Department\\
Aston University\\
B3 7ET, Birmingham, UK}
\begin{document}
\maketitle 
\baselineskip = 2\baselineskip

\begin{abstract}
Cooperative Greedy Pursuit Strategies are 
considered for approximating a signal partition 
subjected to a global constraint on sparsity. 
The approach aims at producing a high quality 
sparse approximation of the whole signal, using highly 
coherent redundant dictionaries. The cooperation takes
place by ranking the partition units for their sequential 
stepwise approximation, and is realized by means of
i)forward steps for the upgrading
of an approximation and/or ii) backward steps
for the corresponding downgrading. The advantage of 
 the strategy is illustrated by  
approximation of music signals using redundant 
trigonometric dictionaries. In addition to rendering 
 stunning improvements in sparsity 
with respect to the concomitant trigonometric 
basis, these dictionaries enable a fast implementation 
of the approach via the Fast Fourier Transform.

\end{abstract}
{\bf{Keywords:}}
 Hierarchized Block Wise Pursuit Strategies,
Optimized Orthogonal Matching Pursuit, Sparse Representation of Music Signals.



\section{Introduction}
Compressible signals, such as audio and vision data, 
are characterized by samples containing a good deal of 
redundancy.
Consequently,
if properly processed, the signal information content can be 
accessed from a reduced data set. 
Transformations for  
data reduction are said to produce a 
{\em{sparse representation}} of a 
signal if they can accurately reproduce the information the 
signal conveys,   
with significantly less points that those by which the
 original signal is given.   
 Most popular transformations for signal processing
(e.g. Discrete Cosine Transform and Discrete Wavelet Transform)
do not modify the signal size. A sparse representation is 
achieved, 
a posteriori, by disregarding the least relevant 
 points in the transformed domain. 
Nonetheless, 
 much higher sparsity in a signal  representation may be 
achieved  
by allowing for the expansion of the transformed domain, 
  thereby leaving room for dedicated transformations adapted  to the particular signal. 
Such a framework involves a large redundant set called `dictionary'. The aim is to 
 represent a 
signal as a superposition of, say $K$, dictionary's 
elements,  which are called `atoms', with $K$  much 
 less than  the signal size.  

Given a redundant dictionary, the problem of 
finding the sparsest approximation of a  signal, up to 
some predetermined error, is an NP-hard problem \cite{Nat95}.
Consequently, the interest in practical applications 
lies in the finding of those solutions which, 
being sparse in relation to 
other possible representations, are also easy and fast to 
construct. Sparse practical solutions can be found 
by what are known as {\em{Pursuit Strategies}}. Here the
discrimination between two broad categories is in order:
i)The Basis Pursuit based approaches, which  endeavor to 
obtain a sparse solution 
by  minimization of the 1-norm \cite{CDS01}. 
ii)Greedy algorithms which look for a sparse solution by stepwise selection of dictionary's atoms.  Practical
greedy algorithms, which originated as regression techniques 
in statistics \cite{FS81}, have been popularized in 
signal processing applications as  
Matching Pursuit (MP) \cite{MZ93}
and Orthogonal Matching Pursuit (OMP) \cite{PRK93} methods.
The approach, which in principle consider the 
stepwise selection of 
single atoms, has been extended to multiple atom
selection \cite{EKB10}. Dedicated  algorithms such as 
{{Stagewise Orthogonal Matching Pursuit}}\cite{DTD06}, 
 {{Compressive Sampling Matching Pursuit}}\cite{NT09}, 
and {{Regularized Orthogonal
Matching Pursuit}}\cite{NV10} are  known to be effective 
within the context of the 
 emerging theory of sampling, 
called {\em{compressive sensing$/$sampling}}. This 
theory
 asserts that sparsity of a representation may also
lead to more economical
data collection \cite{Don06,CRT06,Bar07,CW08,Bar11}.  
In that context 
the reconstruction problem is of a very particular 
nature, though: It is assumed that a signal is sparse in 
 an orthogonal basis and the goal is to reconstruct 
the signal from a reduced number of measures. 
 On the contrary,  in this Communication we 
 address the traditional representation matter: 
the signal is assumed to be completely
given by its samples. The aim is to produce a 
high quality approximation of all those samples, 
as a $K$-term superposition of atoms belonging to a 
highly coherent dictionary. In this case,
minimization of the 1-norm is not effective and 
step wise greedy selection of single atoms benefits 
sparsity results. 

In practice, when trying to approximate 
real life signals using a redundant
dictionary, there is a need to approximate  
by partitioning. While this requirement normally comes 
from storage and computational complexity demands,
it does not represent a 
disadvantage. On the contrary, to capture local properties 
of signals such as images, or music, non-local orthogonal 
transforms are also applied on partitions
to obtain superior results. The broadly used 
image compression standard JPEG and the music compression 
 standard MP3, for instance, 
 operate by partitioning as a first step in 
the compression process.

The central aim of this paper is to tailor
pursuit algorithms for approximation by partitioning 
 without significant increment in  computational complexity, 
in comparison with standard applications. 
The examples presented here clearly show
that a global constraint on sparsity, rather than quality 
constraints on the individual partition's units, 
may benefit enormously the quality of the signal 
approximation. 
This is true even if
the actual approximation of each unit
is performed individually. 
Because considerations
are restricted to producing high quality approximations,
there is no risk that blocking artifacts could          
appear due to the signal division into non-overlapping 
 pieces.                                 

\subsection*{Notational convention}
Throughout the paper $\R$, $\C$ and $\N$ indicate 
the sets of real, complex, and natural numbers, respectively.
Boldface letters are used to indicate Euclidean vectors,
whilst standard mathematical fonts indicate components,
e.g., $\vf \in \R^N,\, N \in \N$ is a vector of components
$f(i),\, i=1,\ldots,N$.
A partition of a signal $\vf \in \R^N$ is represented
as a set of disjoint pieces $\vfq \in \R^{\Nq},\, 
q=1,\ldots,Q$, which for simplicity are assumed to
be all of the same size and such that $Q \Nq =N$. 
The signal is reconstructed from the partition 
through the operation 
 $\vf = \Join_{q=1}^{Q} \vfq$ where $\Join$ represents the
 operator which concatenates the partition. The  
operation is defined as follows: given $\vf\{1\} \in 
\R^{\Nq}$ and $\vf\{2\} \in \R^{\Nq}$,  
the vector $\vf= \vf\{1\}\, \Join \,\vf\{2\}$ is a vector 
in $\R^{2 \Nq}$
having components $f(i)=f\{1\}(i)$ for $i=1,\ldots,\Nq$, 
and $f(i)=f\{2\}(i -\Nq)$ for $i=\Nq+1,\ldots,2\Nq$. Thus
$\vf = \Join_{q=1}^{Q} \vfq$ is a vector in $\R^{Q \Nq}$
having components 
$f(i)=f\cq(i-(q-1)\Nq),\, i=(q-1)\Nq+1,\ldots,q\Nq,\,q=1,\ldots,Q$.
Consequently $\La \vf, \vf\Ra = \| \vf \|^2= \sum_{q=1}^Q \left \| \vfq \right \|^2,$  were $\ds{\La \cdot,\cdot \Ra}$ 
 indicates the Euclidean inner product and 
$\| \cdot\|$ the induced 2-norm.
\subsection*{Paper contributions}
Given a signal partition 
$\vfq \in \R^{\Nq},\, q=1,\ldots,Q$
and a 
dictionary 
${\D=\left\{\vd_n \in \R^{\Nq}\,; \|\vd_n\|=1 \right\}_{n=1}^M}$
to approximate the 
elements $\vfq $ in 
 the partition, the
following outcome has been recently reported
\cite{RNMB13}:
 A very significant gain in the sparsity of a signal 
approximation 
may be effectively obtained by a greedy pursuit
strategy,
if the approximation of each piece $\vfq$ (called
`$q$-block') is
accomplished in a hierarchized manner. Suppose that
the $\kq$-term approximation of $\vfq$ is the
atomic decomposition:
$$\vfa\cq= \sum_{n=1}^{\kq} c\cq_n 
\vd_{\ell\cq_n},\quad q=1,\ldots,Q,$$
with the atoms $\vd_{\ell\cq_n},\,n=1,\ldots,\kq$
selected from the dictionary $\D$, via a stepwise
greedy pursuit strategy. Suppose also that the number
of atoms to approximate the whole signal $\vf$ is a
fixed value $K$, i.e., $K=\sum_{q=1}^Q \kq$. 
A possibility to handle this constraint is to
consider a hierarchized selection
 of the pieces $\vfq$ to be approximated in each
approximation step.
Some remarkable results of this strategy, which has
been termed Hierarchized Block Wise (HBW)
greedy strategy, are illustrated in \cite{RNMB13}
by approximating images using the
greedy algorithms
MP and OMP.  When these methods are
applied in the proposed HBW fashion
are called HBW-MP and HBW-OMP, respectively.

While \cite{RNMB13} focusses on highlighting the
suitability of the HBW-OMP$/$MP method 
 when the image approximation is  carried 
out in the wavelet domain,
this Communication extends the method 
 as well as the range of applicability. The
extended techniques are shown to produce hugely sparse 
high quality approximation of music signals. This   
is accomplished with trigonometric dictionaries, 
which are endowed with the
additional advantage of enhancing the competitiveness
of the approach in terms of computational complexity.

The extension of the idea outlined in \cite{RNMB13}
comprises:

\begin{itemize}
\item
A revised version of the approach, which optimizes 
the ranking of 
the pieces $\vfq,\,q=1,\ldots,Q$ for their step wise
approximation without significant extra computational cost. 
Additionally, the selection of atoms is also optimized
by including the Optimized Orthogonal
Matching Pursuit (OOMP) criterion \cite{RNL02}.
\item
A HBW backward approach for downgrading the approximation
when required. This is realized in a stepwise optimized 
manner, by removing terms in the approximations of the
selected blocks.
\item
An alternative to the HBW strategy which
consists of two stages.
The first stage involves the approximation
of the pieces $\vfq,\, q=1,\ldots,Q$, up to a
 tolerance error.
The second stage
refines the previous approximation by making possible
 the downgrading of the atomic decomposition of some blocks,
called  `donors',  and the upgrading    
of the atomic decompositions of another blocks, called
 `receivers'. Since this process is inspired in the
Swapping-based-Refinement of greedy strategies introduced in
\cite{ARN06}, we refer to it as 
HBW-Swapping-based-Refinement (HBW-SbR).  
\end{itemize}
Further contributions of the paper are
\begin{itemize}
\item
The illustration of the {\underline{huge}} gain in 
the sparsity of the representation  of 
music signals obtainable by the proposed approach when using
trigonometric dictionaries. 
\item
The finding that a mixed trigonometric dictionary 
having both, discrete cosine and sine components, renders 
the best sparsity performance for melodic music signals, 
in comparison with the single discrete  cosine (or sine)
 dictionaries with the same redundancy. 
This result contrasts with the fact that, the broadly 
used single  discrete cosine basis is the basis
  yielding best sparsity results for music 
signals, in comparison to other trigonometric basis 
of the same nature.\\
Note: The  discrete  cosine/sine
components of the mixed dictionary are not just 
the real and imaginary parts of the complex 
exponentials in a Fourier series. The involved 
phase factors make the whole difference in achieving 
high sparsity.
\item
The provision of implementation details in the form 
of algorithms for the proposed strategies and
the provision of dedicated algorithms to 
operate with trigonometric dictionaries via the Fast 
 Fourier Transform (FFT). 
\end{itemize}
A library of MATLAB functions for implementing
the proposed methods, and running  the
numerical examples in this paper,
 are available for downloading
on \cite{webpage}.

\section{Hierarchized Blockwise OMP, revised}
\label{HBW}
As already stated, a
signal $\vf \in \R^N$  will
be considered to be the composition of $Q$ identical 
and disjoint blocks 
$\vf = \Join_{q=1}^{Q} \vfq,$,
where $\vfq \in \R^{\Nq}$ 
 assuming that $\Nq Q=N$.
Each piece $\vfq \in \R^{\Nq}$ is approximated using 
 a dictionary $\D=\left\{\vd_n \in \R^{\Nq}\, 
; \| \vd_n\|=1 \right\}_{n=1}^M$, with 
$M>\Nq$, by an atomic decomposition 
of the form:
\be
\label{atoq}
\vfa\cq= \sum_{n=1}^{\kq}
 c\cq_n \vd_{\ell\cq_n},
\quad q=1\ldots,Q.
\ee
where the atoms $\vd_{\ell\cq_n},\, n=1,\ldots,\kq$  
 are  selected from the dictionary 
to approximate the block $q$. Thus, each
set of selected indices 
$\Gamma\cq=\{\ell\cq_n\}_{n=1}^{\kq}$
is a subset of the set of labels $\{n\}_{n=1}^M$
 which identify the atoms in $\D$. 
In general $\Gamma \cq \ne 
\Gamma \{p\}$ and $\kq \ne k(p)$ if $p\ne q$.

The HBW-OMP approach outlined in \cite{RNMB13} 
selects atoms in \eqref{atoq} as indicated by the 
OMP approach, i.e.: 
On setting initially $\kq=0$ and 
$\vR^{0}\cq=\vfq$,
 the algorithm picks the atoms in the
  atomic decomposition
of block $q$ by selecting one by one the indices 
which satisfy:
\be
\begin{split}
\ell{\cq}_{\kq+1}=
\operatorname*{arg\,max}_{n=1,\ldots,M}
 \left |\La \vd_n, \vR^{\kq}\cq  \Ra \right |,\, \text{with} \,\,
 \vR^{\kq}\cq = \vfq - \sum_{n=1}^{\kq} c{\cq}_n 
\vd_{\ellq_n}\,\,\text{if}\,\, \kq>0.
\end{split}
\label{omp}
\ee
The coefficients 
$c{\cq}_n,\,n=1,\ldots,\kq$ in \eqref{omp}
are such that the norm ${\left\|\vR^{\kq}\cq \right\|}$  is
 minimized.
This is ensured by requesting that 
$\vR^{\kq}\cq= \vfq- \op_{\V_{\kq}^q} \vfq$,
where $\op_{\V_{\kq}^q}$ is the orthogonal
projection operator
onto $\V_{\kq}^q=\Spann \left\{\vd_{\ell\cq_n} \right\}_{n=1}^{\kq}$. This condition also guarantees that the 
set $\left\{\vd_{\ell\cq_n} \right\}_{n=1}^{\kq}$ is 
linearly independent.

An implementation as in \cite{RNL02} 
provides us with two representations of $\op_{\V_{\kq}^q}$.
One of the 
representations is achieved by orthonormalization 
of the set $\left\{\vd_{\ell\cq_n} \right\}_{n=1}^{\kq}$.
The other by biorthogonalization of the same set.
We implement the orthogonalization by 
 Gram Schmidt method, including  
a re-orthogonalization step, as follows:  
 The orthonormal set
$\left\{\vWtq_n \right\}_{n=1}^{\kq}$ is inductively 
constructed from $\vWtq_1= \vWq_1=\vd_{\ellq_1}$
 through the process:
\be
\label{GS0}
\vWtq_n=\frac{\vWq_{n}}{\|\vWq_n\|},\quad \text{with}
\quad
\vWq_n= \vd_{\ell\cq_{n}} - \sum_{i=1}^{n-1}
{\vWtq_i}\La {\vWtq_i,\vd_{\ell\cq_{n}}} \Ra.
\ee
For numerical accuracy
at least one re-orthogonalization step
is usually needed. This
implies to recalculate the above vectors as
\be
\label{GSR}
\vWq_{n} \leftarrow \vWq_{n}- \sum_{i=1}^{n-1}
\vWtq_{i}
\La {\vWtq_{i},\vWq_{n}}\Ra,
\ee
with the corresponding normalization giving rise to the
orthonormal set $\left\{\vWtq_n \right\}_{n=1}^{\kq}$.
Notice that, whilst this set can be used to calculate  
the orthogonal projection of $\vfq$ onto $\V_{\kq}^q$ 
as 
\be
\label{pow}
\op_{\V_{\kq}^q} \vfq= \sum_{n=1}^{\kq} \vWtq_n \La {\vWtq_n, \vfq} \Ra,
\ee
this superposition is not the atomic
decomposition in terms of the selected atoms.
In order to produce such a decomposition we use the 
other representation for an orthogonal 
projector given in \cite{RNL02}. 
Namely, 
\be
\label{pob}
\op_{\V_{\kq}^q} \vfq= \sum_{n=1}^{\kq} 
\vd_{\ellq_n} \la{ \vBq_n,
\vfq }\ra.
\ee
For a fixed $q$  
the vectors $\vBq_n,\,n=1,\ldots,\kq$  in 
\eqref{pob} are biorthogonal to the selected atoms,
i.e. $\la \vd_{\ell\cq_n},\vBq_{m}\ra = 0$ if $n\ne m$ and 1 if
$n=m$, and
 span the identical subspace i.e., 
$\V_{\kq}^q= \Spann\{\vBq_{n}\}_{n=1}^{\kq}=\Spann\{\vd_{\ell\cq_n}\}_{n=1}^{\kq}$.  In order to fulfill the last 
condition {\em{all}} the vectors need to be updated
when a new atom is introduced in the spanning set.
Starting from $\vBq_1=\vd_{\ellq_1}$ both 
the calculation and upgrading of the biorthogonal set 
is attained recursively as 
follows
\be
\label{vB}
\begin{split}
\vBq_{\kq}&= \frac{\vWq_{\kq}}{\|\vWq_{\kq}\|^2},\,
\quad \text{with} \quad \vWq_{\kq}\,\,\text {as in \eqref{GS0}} \\ 
\vBq_n& \leftarrow \vBq_n - \vBq_{\kq}\la {\vd_{\ellq_{\kq}},\vBq_{n}}\ra,\quad n=1,\ldots,(\kq-1).
\end{split}
\ee
These vectors produce the required atomic decomposition 
for the block $q$.  The
coefficients in \eqref{atoq} are the
inner products in \eqref{pob}, i.e. $c\cq_n= 
\La {\vBq_{n}, \vfq} \Ra,\,n=1,\ldots,\kq.$ 

The HBW variant of a greedy strategy establishes an order 
for upgrading the
atomic decomposition of the blocks in the partition.
 Instead of completing the approximation of each block 
at once, 
 the atomic decomposition to be upgraded at each iteration 
corresponds to a block, say  $q^\star$, selected 
according to a greedy criterion.
In other words: the HBW version of a pursuit method 
for approximating a signal by partitioning involves
two selection instances: a) the selection of
dictionary's atoms for approximating each of the blocks 
in the partition and
b)the selection, at each iteration step,
of the block where the
upgrading of the approximation is to be realized.
In \cite{RNMB13} the blocks are selected through 
the process
\be
\label{hbwomp}
q^\star=
\operatorname*{arg\,max}_{q=1,\ldots,Q} 
\left |\La{\vd_{\ell\cq_{\kq+1}}, \vR^{\kq}\cq} \Ra \right|.
\ee
We notice, however, that condition \eqref{hbwomp} 
for the block's selection 
can be optimized, in a stepwise sense,
without significant computational cost. 
 The next proposition 
revise the condition \eqref{hbwomp} considered 
 in \cite{RNMB13}.
\begin{proposition}
Let $\ell{\cq}_{\kq+1}$
be the index arising, for each value of $q$,  from the
 maximization process \eqref{omp}.
In order to minimize the square norm of the total
residual $\|\vR^{k+1}\|^2$ at iteration $k+1$ the atomic
decomposition to be upgraded
should correspond to the block $q^\star$ such that
\be
\label{hbwoomp}
q^\star= \operatorname*{arg\,max}_{q=1,\ldots,Q}
\chi(q),\,\,\, {\text{where}}\,\,\, \chi(q)=
\left|\la {\vWtq_{\kq+1},\vfq} \ra\right|=
 \frac{\left|\La{\vd_{\ellq_{\kq+1}}, \vR^{\kq}\cq} \Ra \right|}{\left \|\vWq_{\kq+1} \right \|},
\ee
with $\vWq_{\kq+1}$ and $\,\vWtq_{\kq+1}$ 
as given in \eqref{GS0}.
\end{proposition}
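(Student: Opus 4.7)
The plan is to start from the fact that the total residual cleanly splits across blocks, namely $\|\vR^{k}\|^2=\sum_{q=1}^{Q}\|\vR^{\kq}\cq\|^2$, which follows from the definition of the concatenation operator $\Join$ recalled in the notational convention. Since exactly one block is upgraded per iteration, passing from $k$ to $k+1$ leaves all summands except the chosen $q^\star$ unchanged, so minimizing $\|\vR^{k+1}\|^2$ is equivalent to choosing the block $q$ which gives the largest drop $\|\vR^{\kq}\cq\|^2-\|\vR^{\kq+1}\cq\|^2$.

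The next step is to compute that drop for a generic block $q$. Using the orthonormal representation \eqref{pow} of $\op_{\V_{\kq+1}^{q}}$, Pythagoras yields
\be
\|\vR^{\kq+1}\cq\|^2=\|\vfq\|^2-\sum_{n=1}^{\kq+1}\left|\La \vWtq_n,\vfq\Ra\right|^2=\|\vR^{\kq}\cq\|^2-\left|\La \vWtq_{\kq+1},\vfq\Ra\right|^2,
\ee
so $\chi(q)=|\La \vWtq_{\kq+1},\vfq\Ra|$ is precisely the (positive) square root of the per-block residual decrease. Maximizing $\chi(q)$ over $q$ therefore realizes the optimal one-step reduction, giving the first expression in \eqref{hbwoomp}.

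The remaining task is the identity $\La \vWtq_{\kq+1},\vfq\Ra=\La \vd_{\ellq_{\kq+1}},\vR^{\kq}\cq\Ra/\|\vWq_{\kq+1}\|$. Dividing by the normalization in \eqref{GS0} reduces this to showing $\La \vWq_{\kq+1},\vfq\Ra=\La \vd_{\ellq_{\kq+1}},\vR^{\kq}\cq\Ra$. Substituting the Gram--Schmidt expression for $\vWq_{\kq+1}$ and using self-adjointness of $\op_{\V_{\kq}^{q}}$, the sum $\sum_{i=1}^{\kq}\La \vWtq_i,\vd_{\ellq_{\kq+1}}\Ra\La \vWtq_i,\vfq\Ra$ equals $\La \vd_{\ellq_{\kq+1}},\op_{\V_{\kq}^{q}}\vfq\Ra$, so what remains after the subtraction is $\La \vd_{\ellq_{\kq+1}},\vfq-\op_{\V_{\kq}^{q}}\vfq\Ra=\La \vd_{\ellq_{\kq+1}},\vR^{\kq}\cq\Ra$.

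I do not expect a serious obstacle: the argument is essentially bookkeeping once Pythagoras and the orthogonality of the Gram--Schmidt vectors are invoked. The only thing to be a bit careful about is verifying that the re-orthogonalization step \eqref{GSR} does not disturb the identities used, which is immediate since it only sharpens numerical orthogonality without changing the span or the exact symbolic relation $\vWtq_n\perp\V_{n-1}^{q}$.
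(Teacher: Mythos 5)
Your proof is correct and follows essentially the same route as the paper's: decompose the total squared residual over blocks, use the orthonormal Gram--Schmidt vectors to express the one-step decrease as $\left|\La \vWtq_{\kq+1},\vfq\Ra\right|^2$, and convert to the second form of $\chi(q)$ via $\vWq_{\kq+1}=\vd_{\ellq_{\kq+1}}-\op_{\V_{\kq}^q}\vd_{\ellq_{\kq+1}}$ and the self-adjointness of the projector. You merely spell out the final identity in more detail than the paper, which states it in one line.
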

\begin{proof}
Since at iteration $k+1$ the atomic decomposition of only
one block is upgraded by one atom, the total
residue at iteration ${k+1}$ is constructed as
$$\vR^{k+1}=\Join_{\substack{p=1\\ p \ne q}}^Q\,\vR\{p\}^{k(p)}\, \Join \,\vRq^{\kq+1}.$$
Then, $$\|\vR^{k+1} \|^2= \sum_{\substack{p=1\\ p \ne q}}^Q
\|\vR\{p\}^{k(p)}\|^2 + \|\vRq^{\kq+1}\|^2.$$
Moreover, since
$\vR^{\kq+1}\cq= \vfq - \op_{\V_{\kq+1}^q} \vfq$  using
 the orthonormal vectors \eqref{GS0} to calculate
$\op_{\V_{\kq+1}^q} \vfq$ we have:
$$\|\vR^{\kq+1}\cq\|^2= \|\vR^{\kq}\cq\|^2 - \left|\la {\vWtq_{\kq+1}}, \vfq \ra \right|^2,$$
which is minimum for the $q^\star$-value corresponding to 
the maximum value of $\left|\la {\vWtq_{\kq+1}},\vfq 
\ra\right|$.  Since $\vWq_{\kq+1}= \vd_{\ell\cq_{\kq+1}} -
\op_{\V_{\kq}^q} \vd_{\ell\cq_{\kq+1}}$ and $\op_{\V_{\kq}^q}$ is hermitian \eqref{hbwoomp} follows.
\end{proof}
Notice that, since the vectors $\vWtq_{\kq+1}$ will be
used for subsequent approximations,
the optimized ranking of blocks \eqref{hbwoomp}
 does not require significant extra computational effort.
Only $Q-1$ of these vectors will not have been
otherwise used
when the algorithm stops. Apart from that, the 
complexity of selecting the blocks remains being 
that of finding the maximum element of an array of
length $Q$,  i.e. O($Q$). 

Before presenting the algorithm details we would 
like to consider also the optimization of the 
selection of atoms (stage a) of a HBW pursuit strategy).
Within our implementation the optimized selection of atoms
is readily achievable 
by the Optimized 
Orthogonal Matching Pursuit (OOMP) approach \cite{RNL02},
which selects the index $\ell\cq_{\kq+1}$
through the maximization process:
\be
\ell\cq_{\kq+1}=
\operatorname*{arg\,max}_{\substack{n=1,\ldots,M}}
\frac{\left|\La{\vd_n, \vR^{\kq}\cq}\Ra\right|}{(1- s\cq_n)^{\frac{1}{2}}},\quad\text{with}\quad s\cq_n=\sum_{i=1}^{\kq}|\la \vd_n, \vWtq_{i}\ra |^2.
\label{oomp2}
\ee
The OOMP selection criterion \eqref{oomp2} minimizes, 
in a stepwise sense, 
the norm of the local error when approximating a 
single block \cite{RNL02}.
The extra computational cost, in comparison to the
OMP selection step (c.f. \eqref{omp}) 
is the calculation of the sum $s\cq_n$
 in the denominator of \eqref{oomp2}, for every atom in the 
dictionary, $n=1,\ldots,M$.  
As becomes clear in Algorithm \ref{alsn}, 
by saving the sum of previous iterations, at
iteration $\kq+1$ only the inner products
$\la \vd_n,  \vWtq_{\kq}\ra,\,
n=1,\ldots,M$ need to be calculated.
\subsection{Algorithmic Implementation}
The implementation details of the HBW-OOMP
 method 
is given in Algorithm 1, which is 
realized through 
Algorithms \ref{alell} and \ref{alsn}. 
For a fixed number $K$ the
algorithm iterates until the condition
$\sum_{q=1}^Q \kq=K$ is met.
\newcounter{myalg}
\begin{algorithm}[!htp]
\refstepcounter{myalg}
\setstretch{1.25}
\caption{HBW-OOMP Procedure}
\label{alhbwoomp}
\begin{algorithmic}
\STATE{{\bf{Input:}}\, Signal partition 
$\vfq \in \R^{\Nq},\, q=1,\ldots,Q$.
Dictionary $\D=\{\vd_n \in \R^{\Nq}\,; \|\vd_n\|=1\}_{n=1}^M$. Number $K$ of total atoms to approximate 
the whole signal.}\\

\STATE{{\bf{Output:}}\, Sets $\Gamma\cq, \,q=1,\ldots,Q$ 
containing the indices of the selected atom for each 
block. 
Orthonormal and biorthogonal 
sets  $ \ds{\{\vWtq_n\}_{n=1}^{\kq}}$  and
$\ds{\{\vBq_n\}_{n=1}^{\kq}}$, for each block 
(c.f. \eqref{GSR} and \eqref{vB} respectively).
Coefficients of the atomic decompositions for 
each block $\ds{\{c\cq_n\}_{n=1}^{\kq}},\, q=1,\ldots,Q$. 
Approximation of the partition, $\vf^{{\rm{a}}}\cq,\, q=1,\ldots,Q$.
Approximation of the  whole signal $\vf^{{\rm{a}}}$.}\\

\STATE {\COMMENT {Initialization}}

\STATE {$j=0$}

\FOR {$q=1:Q$} 
\STATE{$k(q)=1;\, \vRq= \vfq; \,\sq_n=0,\,n=1,\ldots,M$}

\COMMENT {Select the potential first atom for
 each block, as below} 

\STATE{$\ds{\ellq=\argmax_{n=1,\ldots,M}\left|\la \vd_n,\vfq \ra \right|}$}
\STATE{
$\ds{\chi(q)=\left|\la \vd_{\ellq},\vfq\ra \right|}\,\,$ \COMMENT{Store the maximum for each block}}

\STATE{ Set $\ds{\vWq_1=\vd_{\ellq};\, \vWtq_1=\vWq_1;\,  \vBq_1=\vWtq_1; \, \Gamma\cq=\emptyset}$ } 

\ENDFOR

\WHILE {$j < K$} 
\STATE{
\COMMENT{Select the block to be approximated 
(c.f.\eqref{hbwoomp}) and store the index of the atom in the approximation of that block, as below}}

\STATE{$\ds{\qs= \argmax_{q=1,\ldots,Q} \chi(q)}$}

\STATE{$\Gamma{\cqs} \leftarrow \Gamma{\cqs} \cup \ell\cqs$}

\COMMENT{Update the residue corresponding to block $\qs$, as 
below}

\STATE{$\ds{\vR\cqs \leftarrow  \vR\cqs-\vWt\{\qs\}_{\kqs}\la \vWt\cqs_{\kqs}, \vf\cqs \ra}$
}

\IF {$\kqs>1$}
\STATE{
\COMMENT{Upgrade the biorthogonal 
 set $\ds{\{\vB\cqs_n\}_{n=1}^{\kqs}}$  to include the 
 atom $\ds{\vd_{\ell\cqs}}$ (c.f. \eqref{vB})}}
\ENDIF


\STATE{\COMMENT{Apply  
Algorithm \ref{alell}
to select the index ${\ell\cqs}$ for a new potential 
atom for the block $\qs$}}

\STATE{Increase $\kqs\leftarrow \kqs+1$}

\STATE{\COMMENT {Compute $\vW\cqs_{\kqs}$ and 
$\vWt\cqs_{\kqs}$ 
from $\vd_{\ell\cqs}$, (c.f. \eqref{GS0} and \eqref{GSR}}}

\STATE{\COMMENT{Update the objective functional $\chi$ for 
the block $\qs$}}

\STATE{$\ds{\chi(\qs)=\left|\la \vWt\cqs_{\kqs}, \vf\cqs\ra 
\right|}$}

\STATE{Increase $j \leftarrow j+1$}

\ENDWHILE

\COMMENT {Calculation of coefficients and approximation}

\FOR  {$q=1 : Q$}  
\STATE{
$\vf^{{\rm{a}}} \{q\}= \vfq - \vRq;\,\,
c\cq_n=\la \vBq_n, \vfq\ra,\,n=1,\ldots,\kq$}

\ENDFOR

\COMMENT {Concatenation of the partition to produce the 
whole signal approximation}

$\vf^{{\rm{a}}} = \Join_{q=1}^Q \vf^{{\rm{a}}}\cq$
\end{algorithmic}
\end{algorithm}

\begin{algorithm}[!pht]
\setstretch{1.6}
\refstepcounter{myalg}
\caption{Atom Selection Procedure}
\label{alell}
\begin{algorithmic}
\STATE{\bf{Input:}}\, Residue $\vR\cqs$. Dictionary
$\ds{\D=\{\vd_n \in \R^{\Nq}\,; \|\vd_n\|=1 \}_{n=1}^M}$.
 Auxiliary sequence
$\{s\{\qs\}_n\}_{n=1}^{M}$ 
 and 
vector $\vWt\{\qs\}_{\kqs}$ for the upgrading.  Set of 
already selected atoms $\Gamma\cqs$.

\STATE{\bf{Output:}}\, Index $\ell\cqs$ 
corresponding to the selected atom
 $\vd_{\ell\cqs}$.\\

\COMMENT{Apply the Auxiliary Procedure, 
Algorithm \ref{alsn}, 
to upgrade the sequence $\{s\{\qs\}_n\}_{n=1}^{M}$ 
with respect to vector $\vWt\{\qs\}_{\kqs}$}

\COMMENT{Select the index, as below}

$\ds{{\ell{\cqs}=\argmax_{\substack{n=1,\ldots,M \\n \notin \Gamma \cqs}} \frac{\left| \la \vd_n , \vR\cqs \ra \right |}{(1-s\cqs_n)^{\frac{1}{2}}}}}$
\end{algorithmic}
\end{algorithm}

\begin{algorithm}[!pht]
\setstretch{1.3}
\refstepcounter{myalg}
\caption{Auxiliary Procedure}
\label{alsn}
\begin{algorithmic}

\STATE{\bf{Input:}}\, Dictionary
$\ds{\D=\{\vd_n \in \R^{\Nq}\,; \|\vd_n\|=1 \}_{n=1}^M}$.
 Auxiliary sequence
$\{s\{\qs\}_n\}_{n=1}^{M}$ and vector $\vWt\{\qs\}_{\kqs}$
for the upgrading.

\STATE{\bf{Output:}}\, Upgraded sequence 
$\{s\cqs_n\}_{n=1}^M$, with respect to
 $\vWt\{\qs\}_{\kqs}$.

\FOR {$n=1:M$}
\STATE{
$\ds{s\{\qs\}_n= s\{\qs\}_{n} + \left |\la \vd_n, \vWt\{\qs\}_{\kqs}\ra \right |^2}$}
\ENDFOR
\end{algorithmic}
\end{algorithm}

\subsection{Trigonometric dictionaries and 
sparse representation of music signals}

The viability of a sparse representation
for a given signal depends in large part  on
the dictionary choice. A possibility to 
 secure a suitable dictionary for approximating 
a signal by partitioning is through 
learning techniques \cite{KDME03,AEB06,RZE10,TF11}. 
Another possibility 
is to build the dictionary by merging
 basis, or dictionaries, containing atoms of
 different nature.  For instance,  a
 Cosine-Dirac dictionary 
enlarged by the incorporation of B-spline  
 dictionaries has been shown to be 
suitable for producing 
highly sparse high quality approximation of images 
\cite{RNB13}. Such dictionaries are very redundant.
In the case of melodic music, however, stunning sparsity 
levels may be achieved with relatively much less redundant 
 dictionaries. As will be illustrated here by 
numerical examples, it is the combination of  
a Redundant Discrete Cosine Dictionary (RDCD) $\Dc$
and a Redundant Discrete Sine Dictionary (RDSD) $\Ds$,
defined
below, which yields highly sparse representation of 
music signals when processed with the proposed techniques.
\begin{itemize}
\item
$\Dc=\{\frac{1}{w^{\rc}(n)}
 \cos{\frac{{\pi(2i-1)(n-1)}}{2M}},i=1,\ldots,\Nq\}_{n=1}^{M}.$
\item
$
\Ds=\{\frac{1}{w^{\rs}(n)}\sin{\frac{{\pi(2i-1)(n)}}{2M}},i=1,\ldots,\Nq\}_{n=1}^{M},$
\end{itemize}
where $w^{\rc}(n)$ and $w^{\rs}(n),\, n=1,\ldots,M$
are normalization factors as given by
$$w^{\rc}(n)=
\begin{cases}
\sqrt{\Nq} & \mbox{if} \quad n=1,\\
\sqrt{\frac{\Nq}{2} + \frac{ 
\sin(\frac{\pi (n-1)}{M}) \sin(\frac{2\pi(n-1)\Nq}{M})}
{2(1 - \cos(\frac{ 2 \pi(n-1)}{M}))}} & \mbox{if} \quad n\neq 1.
\end{cases}
$$
$$w^{\rs}(n)=
\begin{cases}
\sqrt{\Nq} & \mbox{if} \quad n=1,\\
\sqrt{\frac{\Nq}{2} - \frac{ 
\sin(\frac{\pi n}{M}) \sin(\frac{2\pi n \Nq}{M})}
{2(1 - \cos(\frac{ 2 \pi n}{M}))}} & \mbox{if} \quad n\neq 1.
\end{cases}
$$
For $M=\Nq$ each of the above dictionaries is an  
orthonormal basis, the Orthogonal Discrete Cosine Basis 
(ODCB)
and the Orthogonal Discrete
Sine  Basis (ODSB), henceforth to be denoted $\Bc$ and
$\Bs$ respectively. The joint
dictionary is an orthonormal
basis for $M=\frac{\Nq}{2}$, the Orthogonal Discrete 
Cosine-Sine Basis (ODCSB) 
to be indicated as $\Bcs$.  For $M>\Nq$,
 $\Dc$ is a RDCD
and $\Ds$ a RDSD. If $M> \frac{\Nq}{2}$
$\Dcs = \Dc \cup \Ds$ becomes a Redundant Discrete 
Cosine-Sine  Dictionary (RDCSD).

For the sake of discussing a fast calculation of inner
products
with trigonometric atoms,
 given a vector $\vy\in \R^M$, let's define
\be
\label{dft}
{\cal{F}}(\vy,n,M)= \sum_{j=1}^M y(j) e^{-\im  2 \pi \frac{(n-1)(j-1)}{M}},\quad n=1,\ldots,M.
\ee
When $M=\Nq$ \eqref{dft} is the Discrete Fourier Transform
of vector $\vy \in \R^{\Nq}$, which can be evaluated using
FFT. If $M>\Nq$ we can still calculate \eqref{dft} via
FFT by padding with zeros the vector  $\vy$.
Thus, \eqref{dft} can be used to calculate inner
products with the atoms in dictionaries $\Dc$ and
$\Ds$. Indeed,
\be
\label{dct}
\sum_{j=1}^{\Nq}\cos{\frac{{\pi(2j-1)(n-1)}}{2M}}
y(j)= \Re \left(e^{-\im \frac{\pi (n-1)}{2M}}
{\cal{F}}(\vy,n,2M)\right),\,
n=1,\ldots,M.
\ee
and
\be
\label{dst}
\sum_{j=1}^{\Nq}\sin{\frac{{\pi(2j-1)(n-1)}}{2M}}
y(j)= -\Im \left(e^{-\im \frac{\pi (n-1)}{2M}}
{\cal{F}}(\vy,n,2M)\right), \,
n=2,\ldots, M+1,
\ee
where $\Re(z)$ indicates the real part of $z$ and
$\Im(z)$ its imaginary part.

The computation  of inner products with 
trigonometric dictionaries via FFT is outlined
in Algorithm \ref{IP}.
For dictionaries $\Dc$ and $\Ds$ that 
 procedure reduces the complexity for calculating 
the inner 
products 
 from O$(\Nq M)$, per block, to O$(2M\log_2 2M)$, per block.
For dictionary $\Dcs$ the 
reduction is larger:
O$(M\log_2 M)$, because both the real and imaginary parts 
of the FFT are used. 
When  Algorithm 1 is applied with 
trigonometric dictionaries the pieces implemented by  
Algorithms \ref{alell} and \ref{alsn} should be 
replaced by Algorithms \ref{alellfft} and \ref{alsnfft},
given in Appendix A. In addition to speeding 
the calculations 
via FFT, Algorithms \ref{alellfft} and \ref{alsnfft} avoid 
 storing the dictionary. In the case of the dictionary 
$\Dcs$ it is assumed that both the Cosine and Sine 
components are of the same size and the first $\frac{M}{2}$
elements of the dictionary are Cosine atoms. 
\begin{algorithm}[!ht]
\refstepcounter{myalg}
\setstretch{1.6}
\begin{algorithmic}
\label{IP}
\caption{Computation of inner products with  a
trigonometric dictionary via FFT.\hspace{2cm}
IPTrgFFT procedure: $\vP=$IPTrgFFT($\vR, M,$ Case)}

\STATE{\bf{Input:}}$\,\vR \in \R^{\Nq}$, $M$, number of
elements in the dictionary, and Case (`Cos' or `Sin').\\

\STATE{\bf{Output:}} Vector $\vP\in\R^{M}$, 
with the inner products  between $\vRe$ and `Cos' or `Sin' 
dictionaries. 

\STATE \COMMENT{Computation of auxiliary vector
$\vt \in \C^{M}$ to compute $\vP$.}

\STATE{$\vt={\rm{FFT}}(\vR, M)$} \COMMENT{FFT with 
$(M-\Nq)$ zero padding}.

\STATE{Case `Cos'}

\STATE{$p(n)=\frac{1}{w^c(n)} 
\Re(\expn t(n)),\, n=1,\ldots,M$}\COMMENT{(c.f. \eqref{dct})}

\STATE{Case `Sin'}

\STATE{$p(n)=-\frac{1}{w^s(n)}\Im(\expn t(n)),\, n=1,\ldots,M$} \COMMENT{(c.f.\eqref{dst})}

\label{AlIP}
\end{algorithmic}
\end{algorithm}

{\bf{Numerical Example I}}\\
The signal approximated in this example is a piece of 
piano melody shown in Fig 1. 
It consists of $N=960512$ samples (20 secs)
divided into $Q=938$ blocks with $\Nq=1024$
samples each.
\begin{figure}[!ht]
\begin{center}
\includegraphics[width=9cm]{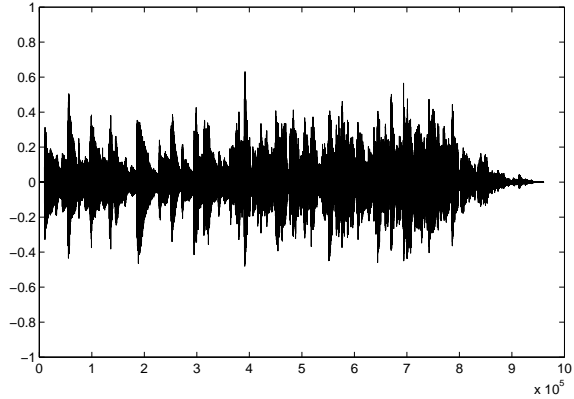}
\caption{{\small{Piano melody.
Credit: Julius O. Smith,
Center for Computer Research in Music and Acoustics (CCRMA),
Stanford University.}}}
\end{center}
\end{figure}
As sparsity measure for a signal representation we consider
the Sparsity Ratio (SR) defined as  
$\text{SR}= \frac{N}{K}$, where $K$ is the total number 
of coefficients in the signal representation.
As a measure of approximation 
quality we use the standard Signal to Noise Ratio
(SNR), 
$$\text{SNR}=10 \log_{10} \frac{\| \vf\|^2}{\|\vf - \vfa\|^2}=
10 \log_{10}\frac{\sum_{\substack{i=1\\q=1}}^{\Nq,Q} |f\{q\}(i)|^2}
{\sum_{\substack{i=1\\q=1}}^{\Nq,Q} |f\{q\}(i) -f^{\rm{a}}\{q\}(i)|^2}.$$
This numerical example aims at illustrating the 
following outcomes in relation to the signal in hand.
\begin{itemize}
\item [1)] The approximation power of all the 
orthogonal basis and dictionaries  defined above
remarkably improve if, instead of  applying OMP/OOMP
independently to approximate each block $\vfq$ up to
the same SNR, 
the HBW-OMP/OOMP approach is applied to 
match the sparsity of the whole signal.
\item [2)]For approximating the signal with the
 OMP/OOMP greedy strategies, the redundant 
dictionaries $\Dc$ and $\Dcs$
 perform significantly better than any of the 
 orthogonal basis 
$\Bc, \Bs$ and $\Bcs$. 
\end{itemize}
\begin{table}[!ht]
\label{tab1}
\begin{center}
\begin{tabular}{| l || r |r|| r|r || r|r || r|r ||}
\hline
Dict.&\multicolumn{2}{|c||}{OMP} & \multicolumn{2}{|c||}{HBW-OMP}
&\multicolumn{2}{|c||}{OOMP}&\multicolumn{2}{|c||}{HBW-OOMP}\\\hline \hline
 &SR&SNR&SR&SNR  &SR&SNR&SR&SNR\\ \hline \hline
$\Bc$&14.38&25.0& 14.38 & 35.19   &14.38&25.0& 14.38 & 35.19\\ \hline
$\Dct$&17.75&25.0& 17.75 &34.91&18.47&25.0& 18.47 &35.08 \\ \hline
$\Dcf$&19.39&25.0& 19.39  & 34.64 &19.80&25.0& 19.80& 34.95 \\ \hline \hline
$\Bs$&7.65&25.0& 7.65 & 28.31 &7.65&25.0&7.65 & 28.31\\ \hline
$\Dst$& 12.13 &25.0& 12.13 &29.39&12.63 &25.0&12.63 &29.39 \\ \hline
$\Dsf$& 13.17 &25.0&  13.17&29.34&13.42 &25.0&13.42 &29.33  \\ \hline \hline
$\Bcs$& 10.77&25.0&  10.77  & 30.09&10.77&25.0&10.77&30.09   \\ \hline
$\Dcst$& 20.45  &25.0&  20.45&35.81&22.08&25.0&22.08&35.56\\ \hline
$\Dcsf$ & 23.83 &25.0& 23.83& 35.56 &26.18&25.0&26.18&36.37\\ \hline \hline
\end{tabular}
\caption{\small{Comparison of the approximation quality
(SNR values) and sparsity (SR values) produced
 with trigonometric basis $\Bc, \Bs, \Bcs$ and
redundant trigonometric dictionaries,
$\Dct, \Dcf$,$\Dst, \Dsf, \Dcst$, and
$\Dcsf$.  The second column shows the
SR resulting when applying the block independent OMP
approach to achieve a SNR=25.0dB. The fifth
column demonstrates the significant gain in SNR
rendered by the HBW-OMP strategy for the same sparsity.
Further improvements of the same nature
 are shown in the last four columns
 corresponding to the approaches
OOMP and  HBW-OOMP.
}}
\end{center}
\end{table}
In oder to demonstrate 1) and 2) each of the blocks 
$\vfq$ is
approximated independently with OMP/OOMP, up to
SNR=25 dB. Redundant dictionaries,
with redundancy 2 and 4, are simply creating
by setting $M=2\Nq$ and $M=4\Nq$ in the definitions
of $\Dc$ and $\Ds$, with $\Nq=1024$. They will be
denoted as $\Dct$, $\Dst$ and
$\Dcf$, $\Dsf$, respectively. 
Approximations of the same quality are performed
with each of the orthonormal basis $\Bc, \Bs$ and $\Bcs$, 
and the mixed dictionaries $\Dcst=\Bc \cup \Bs$ and 
$\Dcsf=\Dct \cup \Dst$. 
The results are presented in Table 1.
As can be seen, the SR produced  
by the redundant dictionaries 
$\Dct, \Dcf, \Dcst$, and $\Dcsf$
is substantially larger 
than that corresponding to any of the orthogonal basis.  
Moreover, in all the cases the HBW-OMP/OOMP strategies 
improve notoriously (up to 11 dB) 
upon the OMP/OOMP approaches applied independently to
produce a uniform SNR in the approximation of 
each block.  

It is noticed that the highest sparsity
is attained by the dictionary $\Dcsf$ which, as already 
discussed, involves the most effective implementation 
via FFT. It is also seen 
that HBW-OOMP
over performs  HBW-OMP in terms of sparsity 
(the running times are similar). This will appear even more 
clearly in Table 2,  where the approximation 
of both methods are 
downgraded to produce a SNR of 25dB.
The next section discusses the HBW backward strategy
which allows for removing atoms from the signal
 approximation, in a stepwise optimized fashion. 
\subsection{Hierarchized Blockwise Backwards OOMP}
We extend here the Backward Optimized Orthogonal 
Matching Pursuit (BOOMP) strategy \cite{ARNS04} to select 
also the blocks from which the atoms are to be removed for 
downgrading the approximation of a whole signal.
 The BOOMP strategy
is stepwise optimal because it 
minimizes, at each step, the norm of the error resulting by 
 downgrading, by one atom, an atomic decomposition. This 
result readily follows from the
 recursive equations for 
modifying vectors
$\{\vBq_n\}_{n=1}^{\kq}$  
to account for the elimination 
of, say the $j$-th atom, from the set 
$\{\vd_{\ellq_n}\}_{n=1}^{\kq}$. For each $q$, the reduced  
set of vectors
$\{\vBq_{n}\}_{\substack{n=1\\ n \neq j}}^{\kq}$ 
spanning  the reduced 
subspace $\V_{\kq/j}^q=\Spann \{\vd_{\ellq_n}\}_{\substack {n=1\\ n \neq j}}^{\kq}$ 
can be quickly obtained through the adaptive 
backward equations \cite{ARNS04,RN04}
\be
\label{bacB}
\vB\cq_n \leftarrow \vB\cq_n - \vB\cq_j \frac{\la \vB\cq_n, \vB\cq_j \ra}{\|\vB\cq_j\|^2},\quad n=1,\ldots,j-1,j+1,\ldots,\kq.
\ee
Consequently, the coefficients of the atomic decomposition 
corresponding to the block $q$, from which the atom 
$\vd_{\ellq_j}$ is 
 taken away, are modified as 
\be
\label{bacc}
c\cq_n \leftarrow  c\cq_n -c\cq_j \frac{\la \vB\cq_j, \vB\cq_n \ra}{\|\vB\cq_j\|^2},\quad
n=1,\ldots,j-1,j+1,\ldots,\kq.
\ee
For a fixed value of $q$, the BOOMP criterion  
removes the 
coefficient $c\cq_{\jd}$ such that \cite{ARNS04}
\be
\jd\cq=
\operatorname*{arg\,min}_{\substack{j=1,\ldots,\kq}}
 \frac{|c\cq_j|}{\|\vB\cq_j\|^2}.
\label{boompj}
\ee
The equivalent criterion applies to the selection of 
the $\qd$-block for the downgrading of the whole
approximation. The proof of the next proposition 
parallels the proof of \eqref{boompj} given in \cite{ARNS04}.
\begin{proposition}
Assume that the approximation of a signal is given as
$\vfa= \Join_{q=1}^Q \vfaq$, 
where
$\vfaq= \op_{\V_{\kq}^q}\vfq$.
Let $\jd\cq,\, q=1,\ldots,Q$ be the indices
 which satisfies  
\eqref{boompj} for each $q$.  The block $\qd$,
from where the atom
 $\vd_{\ell_{\cqd_{\jd\cqd}}}$
is to be removed, 
 in order to leave an approximation $\vfad$ 
such that
 $\|\vfa - \vfad\|$ takes its minimum value,
 satisfies the condition:
\be
\label{boompq}
\qd=
\operatorname*{arg\,min}_{q=1,\ldots,Q} 
\frac{|c\cq_{\jd\cq}|}{\|\vB\cq_{\jd\cq}\|^2}.
\ee
\end{proposition}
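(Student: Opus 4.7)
The plan is to reduce the global downgrading problem to a block-local one, and then invoke the BOOMP result of \cite{ARNS04} that is already referenced. First I would exploit the block structure of $\vfa$ and $\vfad$: since the downgrading step removes exactly one atom from the atomic decomposition of a single block $q$, only the component $\vfaq$ is altered, while $\vfa\{p\}$ is preserved for every $p\ne q$. By the identity $\|\vf\|^2 = \sum_{q=1}^Q \|\vfq\|^2$ stated in the notational conventions, this gives immediately
\be
\label{plan-decomp}
\|\vfa - \vfad\|^2 = \|\vfaq - \vfadq\|^2,
\ee
so that minimizing the global residual reduces to minimizing the local residual over the choice of the block $q$.

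Next I would plug in the local (per-block) expression for the norm change produced by the biorthogonal backward update \eqref{bacB}--\eqref{bacc}. This is exactly the identity that underlies \eqref{boompj}: removing the atom $\vd_{\ellq_j}$ from the $q$-th atomic decomposition changes the approximation by the vector $\tfrac{c\cq_j}{\|\vB\cq_j\|^2}\vB\cq_j$ (up to the well-known biorthogonal algebra used in \cite{ARNS04,RN04}), and the resulting squared error contribution can be written in closed form in terms of $c\cq_j$ and $\|\vB\cq_j\|$. For each fixed $q$, the stepwise-optimal index is therefore $\jd\cq$ given by \eqref{boompj}, and the minimum local increment is $\frac{|c\cq_{\jd\cq}|}{\|\vB\cq_{\jd\cq}\|^2}$ in the convention used by the paper.

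Finally, I would combine these two ingredients: by \eqref{plan-decomp}, the block $\qd$ that minimizes $\|\vfa - \vfad\|$ is precisely the one for which the already-optimal local increment $\frac{|c\cq_{\jd\cq}|}{\|\vB\cq_{\jd\cq}\|^2}$ is smallest, which is the claim \eqref{boompq}. The main obstacle, and the only non-trivial bookkeeping, is the second step: one must check that the closed-form expression for the norm decrement of a single biorthogonal removal is indeed what drives the criterion \eqref{boompj}, and that the choice of $\jd\cq$ made independently inside each block is still optimal after allowing competition between blocks. The first point is inherited verbatim from \cite{ARNS04}, while the second is a direct consequence of the separability \eqref{plan-decomp}, so no cross-block interaction terms appear and the argument concludes without further technicalities.
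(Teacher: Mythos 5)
Your overall architecture coincides with the paper's proof: the separability step (only one block is altered, so $\|\vfa-\vfad\|=\|\vfa\cq-\vfad\cq\|$) is exactly the paper's first line, and your second step is, as in the paper, an appeal to the closed-form expression for the single-block increment inherited from \cite{ARNS04}. The difficulty is that the step you explicitly defer as ``the only non-trivial bookkeeping'' is where the entire content of the proposition lives, and as written your proposal is internally inconsistent there. You state that removing the $j$-th atom changes the block approximation by the vector $\frac{c\cq_j}{\|\vB\cq_j\|^2}\,\vB\cq_j$. Taking that at face value, the squared local increment is
\[
\Bigl\|\frac{c\cq_j}{\|\vB\cq_j\|^2}\,\vB\cq_j\Bigr\|^2=\frac{|c\cq_j|^2}{\|\vB\cq_j\|^2},
\]
whose minimization over $q$ ranks the blocks by $|c\cq_{\jd\cq}|/\|\vB\cq_{\jd\cq}\|$ (first power of the norm), not by the functional $|c\cq_{\jd\cq}|/\|\vB\cq_{\jd\cq}\|^2$ asserted in \eqref{boompq}. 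The paper's own proof instead writes the increment as $\vd_{\ell\cq_{\jd\cq}}\la \vB\cq_{\jd\cq},\vfq\ra/\|\vB\cq_{\jd\cq}\|^2$, i.e.\ a vector along the \emph{unit-norm atom}, which gives squared norm $|c\cq_{\jd\cq}|^2/\|\vB\cq_{\jd\cq}\|^4$ and hence exactly the functional in \eqref{boompq}. These two closed forms are not interchangeable: $\vB\cq_j$ and $\vd_{\ell\cq_j}$ are in general neither parallel nor of equal norm, and the two resulting block rankings disagree whenever the quantities $\|\vB\cq_{\jd\cq}\|$ vary across blocks, which is the generic situation for a coherent dictionary.

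So you cannot discharge this step by citation ``up to the well-known biorthogonal algebra''; you must actually compute the increment and carry the exponent of $\|\vB\cq_j\|$ through to the end. The clean route is to observe that $\vfa\cq-\vfad\cq=(\op_{\V_{\kq}^q}-\op_{\V_{\kq/j}^q})\vfq$ is the orthogonal projection of $\vfq$ onto the one-dimensional orthogonal complement of $\V_{\kq/j}^q$ inside $\V_{\kq}^q$; since $\vB\cq_j\in\V_{\kq}^q$ and $\la \vd_{\ell\cq_n},\vB\cq_j\ra=0$ for $n\ne j$, that complement is $\Spann\{\vB\cq_j\}$, which pins down the closed form and a definite power of $\|\vB\cq_j\|$ in the squared error. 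Reconciling that exponent with \eqref{boompj}--\eqref{boompq}, and with the expression used in the paper's proof, is precisely the missing argument; until it is supplied, your proposal establishes only that the optimal donor block minimizes \emph{some} function of $|c\cq_{\jd\cq}|$ and $\|\vB\cq_{\jd\cq}\|$, not the specific functional claimed in the proposition.
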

\begin{proof}
Since at each step only one
atom is removed from 
 a particular block, say the $q$-th one, it holds that
$\|\vfa - \vfad\|= \|\vfa\cq - \vfad\cq\|.$
  The identical steps as in \cite{ARNS04} 
lead to the expression
\ben
\vfa\cq - \vfad\cq= \vd_{\ell_{\cq_{\jd\cq}}}\frac{\la \vBq_{\jd\cq},\vfq\ra}{\|\vBq_{\jd\cq}\|^2}.
\een
Thus $\ds{\|\vfa\cq - \vfad\cq\|^2= \frac{\left |\la \vBq_{\jd\cq}, \vfq\ra \right|^2}{{\|\vBq_{\jd\cq}\|^4}}}$.
Since $\la \vBq_{\jd\cq}, \vfq\ra= c\cq_{\jd\cq}$ it  
  follows that 
$\|\vfa - \vfad\|$  
is minimized by removing the atom  $\jd\cqd$
corresponding to the block $\qd$ satisfying 
\eqref{boompq}. 
\end{proof}
The HBW-BOOMP algorithm for downgrading a given atomic 
decomposition of a signal is 
 presented in Algorithm 5.

\begin{algorithm}[!htp]
\label{alhbwoomp}
\setstretch{1.5}
\refstepcounter{myalg}
\caption{HBW-BOOMP}
\label{alhbwboomp}
\begin{algorithmic}
\STATE{\bf{Input:}}\,
Biorthogonal 
sets $\ds{\{\vBq_n\}_{n=1}^{\kq}},\,\,q=1,\ldots,Q$.
Coefficients in the approximation of 
each block $\ds{\{c\cq_n\}_{n=1}^{\kq}},\, q=1,\ldots,Q$.
Sets of indices  in the decomposition 
of each block $\Gamma\cq,\, q=1,\ldots,Q$. 
Number of total atoms for the downgraded approximation, 
$K_d$.

\STATE{{\bf{Output:}}\, 
Downgraded biorthogonal sets.
Coefficients of the downgraded atomic decompositions for 
each block. Indices of the  remaining atoms.}\\

\STATE {\COMMENT {Initialization}}

$K=0$;
\FOR {$q=1:Q$}

\STATE{
$\kq =|\Gamma\cq|$\,\,\COMMENT{Number of 
elements in $\Gamma\cq$} 

$K=K+\kq$

\COMMENT{Select the index of the potential first atom to 
be removed in each block, as below} 

$\ds{\jdq= \argmin_{n=1,\ldots,\kq}\frac{|c\cq_n|}{\|\vBq_n\|^2}}$ 

$\ds{\chi_{\downarrow}(q)=\frac{|c\cq_{\jdq}|}{\|\vBq_{\jdq}\|^2}}$
\,\,\COMMENT{Store the minimum}}

\ENDFOR

$i=K$

\WHILE {$i > K_{\rm{d}}$} 
\STATE{
\COMMENT{Select the block to be downgraded, as below}  

{$\ds{\qd= \argmin_{q=1,\ldots,Q} \chi_{\downarrow}(q)}$}

\COMMENT{Apply backward biorthogonalization  
(c.f. \eqref{bacB}) to downgrade the biorthogonal set 
corresponding to block $\qd$, with respect 
to $\vB\cqd_{\jd\cqd}$ and shift indices}

\COMMENT{Update the coefficients corresponding to 
block $\qd$ (c.f. \ref{bacc}) and shift indices}

{Downgrade the set $\Gamma{\cqd} \leftarrow \Gamma{\cqd}/\ell\cqd_{\jdqd}$ and shift indices}

Decrease $\kq \leftarrow \kq-1$ 

\COMMENT{Select the new potential atom to be 
removed from block $\qd$, as below}

{$\ds{\jdqd= \argmin_{n=1,\ldots,\kq}
\frac{|c\cqd_n|}{\|\vB\cqd_n\|^2}}$}

\COMMENT{Update the objective functional $\chi_{\downarrow}$
for the block $\qd$, as below}

$\ds{\chi_{\downarrow}(\qd)=\frac{|c\cqd_{\jdqd}|}{\|\vB\cqd_{\jdqd}\|^2}}$

{Decrease $i \leftarrow i-1$}
}
\ENDWHILE
\end{algorithmic}
\end{algorithm}

\subsection*{Numerical Example II}
In order to illustrate
the backward approach, we change
 the  information in the last four columns of Table 1 to
have the sparsity of all the approximations corresponding to
 SNR=25dB.
For this we downgrade the HBW-OMP/OOMP
approximations to degrade the previous quality.
\begin{table}[!ht]
\begin{center}
\begin{tabular}{| l || r |c|| r|c || r |c|| r|c ||}
\hline
Dict.&\multicolumn{2}{|c||}{OMP} & \multicolumn{2}{|c||}{HBW-BOOMP} & \multicolumn{2}{|c||}{OOMP} & \multicolumn{2}{|c||}{HBW-BOOMP}
 \\\hline \hline
 &SR&SNR&SR&SNR  &SR&SNR&SR&SNR\\ \hline \hline
$\Bc$&14.38&25.0& 25.56  & 25.0     &  14.38&25.0&25.56  &25.0 \\ \hline
$\Dct$&17.60&25.0&  34.05 &25.0  &  18.47    &25.0&36.34  &25.0 \\\hline
$\Dcf$&19.25&25.0&  37.01 & 25.0 &  19.80    &25.0&39.06  &25.0  \\ \hline \hline
$\Bs$&7.65&25.0&  13.67   & 25.0   &  7.65     &25.0&13.67  &25.0  \\ \hline
$\Dst$& 12.03 &25.0&  25.74 & 25.0&  12.63     &25.0&27.56  &25.0  \\ \hline
$\Dsf$& 13.11 &25.0& 28.18&  25.0&   13.42     &25.0&29.59  &25.0  \\   \hline \hline
$\Bcs$& 10.77&25.0&  19.94  &  25.0&  10.77     &25.0&19.94  &25.0 \\ \hline
$\Dcst$& 20.21  &25.0&  42.43& 25.0&   22.08    &25.0&48.48  &25.0 \\ \hline
$\Dcsf$ &  23.53 &25.0& 50.19& 25.0&   26.18    &25.0&59.68  &25.0 \\ \hline \hline
\end{tabular}
\label{tab2}
\caption{\small{Comparison of sparsity (SR values) for
the same SNR.
The HBW-BOOMP results are obtained by degrading the
HBW-OMP/OOMP approximation in Table 1 to SNR=25dB.}}
\end{center}
\end{table}
As seen in Table 2, the result is that
the sparsity increases drastically. The SR
improves up to $128\%$ (for the $\Dcsf$ dictionary) 
 in comparison with the standard application of 
the same approaches without ranking the blocks.
 Notice also that the best SR result 
(for dictionary $\Dcsf$), is $133\%$ higher than the best SR result for an 
orthogonal basis (the $\Bc$ basis). 
\section{Refinement by Swaps}
The suitability of the HBW strategy for approximating
the class of signals we are considering is a notable fact.
However, the possibility of approximating every element of
the partition  completely independent of the others has
the convenient feature of leaving
room for straightforward parallelization. 
 This is particularly important for implementations 
on Graphics Processing Units
(GPU), for instance. Accordingly, an alternative worth
exploring is to maintain the block independent
approximation but allowing for possible eventual
HBW refinements.

Assuming that the approximation \eqref{atoq} of 
each block in a signal partition is known, the goal 
now is to improve the signal approximation
maintaining unaltered the total 
number of atoms $K$.
The proposed refinement consists of
 movement of atoms controlled by the following 
 instructions:
\begin{itemize} 
\item [i)]Use criterion \eqref{boompq} to remove one  
atom from the block $\qd$. Let us call this block 
a `donor' block and indicate it 
with the index $q_{{\rm{d}}}$. 

\item [ii)]Use criterion \eqref{hbwoomp} to incorporate 
one atom in the approximation of the block $q^\star$.
 Let us call this block a `receiver' and 
indicate it with the index $q_{{\rm{r}}}$. 

\item [iii)]Denote by $\delta_{{\rm{d}}}$
 the square norm of the error
 introduced at step i)by downgrading  
the approximation of the donor block $q_{{\rm{d}}}$. 
Denote by $\delta_{{\rm{r}}}$ the square norm of the gain in improving the approximation of the receiver $q_{{\rm{r}}}$.
Proceed according to the following rule: 

If  $\delta_{{\rm r}} > \delta_{{\rm d}}$ accept the 
change and repeat steps i) and ii).  Otherwise stop.
\end{itemize}
The above procedure is termed HBW-SbR-OMP/OOMP 
according to whether the selection of the 
atom at stage ii) is realized through the 
 OMP (c.f. \eqref{omp}) or OOMP (c.f. \eqref{oomp2})
criterion.\\
Notice that the HBW-SbR-OMP/OOMP methods, implemented as
proposed above, possess the desirable feature of
acting {\em{only}} if the approximation can be
improved by the proposed swap. 
The possible situation for which  $q_{{\rm{r}}}=
q_{{\rm{d}}}$ corresponds to a swap in the original 
SbR-OMP/OOMP approach \cite{ARN06}, 
 which may take place here as a particular case.

{\bf{Implementation Remark:}}
For successive implementation of the 
steps i) and ii) 
one needs: 
to downgrade the vectors $\vB\cqd_n,\, n=1,\ldots \kq$ 
at step i) and to upgrade these vectors at step ii). 
In order to allow for step ii) 
the orthogonal set  $\left\{\vW\cqd_n \right\}_{n=1}^{\kq}$
has to be downgraded as well. This can be realized 
effectively by the plane rotation 
approach \cite{lsbook,Stw95}.

\subsection*{Numerical Example III}
We illustrate the HBW-SbR-OMP/OOMP approaches by applying them to the signal in Fig. 2, which is a flute  exercise
consisting of $N=96256$ samples
divided into $Q=94$ blocks with $\Nq=1024$
samples each.\\
\begin{figure}[!ht]
\begin{center}
\includegraphics[width=9cm]{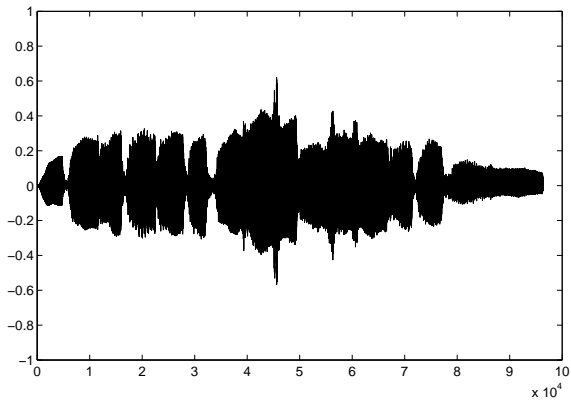}
\vspace{-0.5cm}
\caption{{\small{Flute exercise:
$N=96256$ samples.}}}
\end{center}
\end{figure}
Table 3 and 4 show the improvement in quality obtained 
by applying the HBW-SbR-OMP/OOMP approaches on outputs 
of the OMP/OOMP approximations. 
Notice that in most cases 
the SNR results 
are practically equivalent to those obtained by the 
HBW-OMP/OOMP approaches.
\begin{table}[!ht]
\label{tab3}
\begin{center}
\begin{tabular}{| l || r |r|r |c |c||}
\hline
Dict.& SR & OMP & Swaps & HBW-SbR-OMP & HBW-OMP \\ \hline \hline
$\Bc$&10.13&25.0& 1426& 26.80& 26.79\\ \hline
$\Dct$&14.26 &25.0& 1013 &26.69 & 26.67\\ \hline
$\Dcf$&15.58&25.0& 945 & 26.63 & 26.61\\ \hline \hline
$\Bs$&7.27&25.0& 2612 & 26.72& 26.73 \\ \hline
$\Dst$& 12.10 &25.0& 1397 & 27.24& 27.27\\ \hline
$\Dsf$& 13.16 &25.0& 1435&  27.27 & 27.25\\ \hline \hline
$\Bcs$& 8.08&25.0& 1822 &  26.65& 26.65   \\ \hline
$\Dcst$&  29.00 &25.0&  636 & 27.38& 27.37\\ \hline
$\Dcsf$ &  35.17 &25.0& 518& 27.36& 27.36 \\ \hline \hline
\end{tabular}
\caption{\small{Comparison of quality (SRN values) for
a fixed sparsity: that corresponding to SRN=25 with the
OMP approach. The forth column shows the number of
swaps and the fifth columns the SRN achieved  by those
swaps through the HBW-SbR-OMP refinement  to the
outputs yielding the second column.
For further comparison the last column shows the
results corresponding to the HBW-OMP approach.}}
\end{center}
\end{table}
\begin{table}[!ht]
\label{tab3}
\begin{center}
\begin{tabular}{| l || r |r|r |c |c||}
\hline
Dict.& SR & OOMP & Swaps & HBW-SbR-OOMP & HBW-OOMP \\ \hline \hline
$\Bc$&10.13&25.0& 1426& 26.80& 26.79\\ \hline
$\Dct$&14.83 &25.0& 865 &26.66 & 26.68\\ \hline
$\Dcf$&15.80&25.0& 929 & 26.60 & 26.61\\ \hline \hline
$\Bs$&7.27&25.0& 2612 & 26.72& 26.73 \\ \hline
$\Dst$&12.58 &25.0& 1531 & 27.14& 27.27\\ \hline
$\Dsf$&13.34 &25.0& 1445&  27.27 & 27.26\\ \hline \hline
$\Bcs$&8.08&25.0& 1822 &  26.65& 26.65   \\ \hline
$\Dcst$&33.47 &25.0&  556 & 27.64& 27.63\\ \hline
$\Dcsf$&42.05 &25.0& 228& 27.10& 27.69 \\ \hline \hline
\end{tabular}
\caption{\small{Same description as in Table 3 but
for the OOMP, HBW-SbR-OOMP, and HBW-OOMP approaches.}}
\end{center}
\end{table}

\section{Processing of large signals}
\label{PLS}
As already discussed, the additional computational
 complexity introduced 
 by the HBW raking of $Q$ blocks is 
 only O($Q$) (over the complexity of a 
 standard pursuit algorithm implemented without 
ranking the blocks). Storage does 
become more demanding though.
The need for saving
 vectors \eqref{vB} and \eqref{GS0}, for instance,
implies to have to store $2K$ vectors of size $\Nq$.
This requirement restricts the amount of
blocks to be processed with a small system such as 
a standard laptop. Hence, 
we outline here  a processing scheme which 
makes it possible the application of HBW techniques 
on very large signals. 
For this the whole signal needs to be  
divided into large segments of convenient size, 
say $\Ns$. 
More specifically, a number of, say $P$ blocks, of size 
$\Nq$
are grouped together to form $S$ larger segments
$\vg\cs = \Join_{q=(s-1)P+1}^{sP} 
\vfq,\,s=1,\ldots,S$,  where 
it is assumed that 
$\Nq P= \Ns$ and $S\Ns= \Nq Q$.
 Each segment $\vg\cs \in \R^{\Ns}$ is 
approximated with a HBW strategy as an entire individual signal.
The approximation of the  original signal is 
assembled at the very end of the process  by the 
operation
${\ds{\vf^{\rm{a}} = \Join_{s=1}^{S}{\vg^{\rm{a}}\cs}}}$.
In other words, the processing of large signals is 
realized by chopping the signal into segments and 
processing each segment independently of the others. 
Nevertheless, the question as to how to 
set the sparsity 
constraint for each segment needs further 
consideration. One could, of course, 
require the same sparsity in every segment, 
unless information advising otherwise were available. 
While uniform sparsity guarantees the same sparsity
on the whole signal, in particular 
cases, where the nature of the signal changes over 
its range of definition, it would not render the best 
approximation. In order to avoid encountering this
type of situation we introduce a preliminary step:
{\em{the randomization of all the small blocks in the 
signal partition}}. The implementation is carried out
 as follows: 
\begin{itemize}
\item [i)]
Given a signal $\vf$ split it into $Q$ blocks  
of size $\Nq$. Apply an invertible random permutation 
 $\Pi$ to scramble the block's location (the bottom 
graph in Figure 3 provides 
a visual illustration of how the 
signal in the top graph looks after this step).
Let's denote 
the re-arranged signal by
$\vft= \Join_{q=1}^Q \vft\cq$. 
\item [ii)]
Group every $P$ of the blocks $\vft\cq,\,q=1,\ldots,Q$  to 
have  $S$ segments 
$\vgt\cs = \Join_{q=(s-1)P+1}^{sP} 
\vft\cq,\,s=1,\ldots,S=\frac{Q}{P}$.
\item [iii)]
Approximate each of the segments $\vgt\cs$ independently 
of the others, as it each segment were an 
entire signal, i.e., 
obtain $\vgt^{\rm{a}}\cs= \Join_{q=(s-1)P+1}^{sP} 
\vft^{\rm{a}}\cq,\,s=1,\ldots,S$.
\item [iv)]
Assemble the segments to have the approximation of the 
whole signal as $\vft^{\rm{a}}=\Join_{s=1}^S \vgt^{\rm{a}}\cs$.
\item [v)]
Reverse the permutation of the block's location
to obtain, from $\vft^{\rm{a}}$, the approximation 
of the original signal $\vfa$.
\end{itemize}
\begin{figure}[h!]
\begin{center}
\includegraphics[width=9cm]{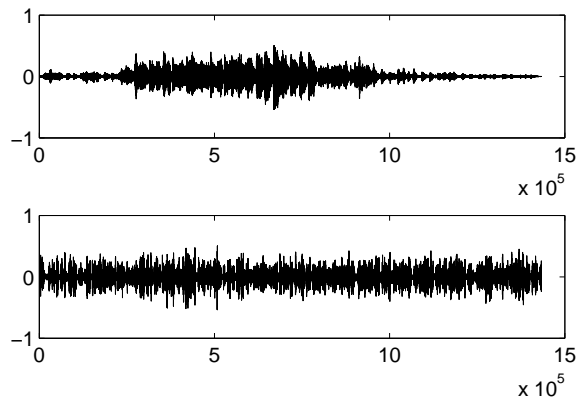}
\caption{{\small{The top graph shows a piece (1433600 samples) of Piazzola music interpreted by Gidon Kremmer and
orchestra. The bottom graph shows the
resulting signal after the randomization of the
blocks.}}}
\end{center}
\end{figure}
{\bf{Numerical Example IV}}\\
The signal to be approximated is a 32.5 secs
(1433600 samples) piece of
Piazzola music, shown in the top graph of Figure 3.  It is
partitioned into $Q=1400$ blocks
of 1024 points each. After randomization (bottom graph of
the same figure)
the blocks are grouped into
50 segments, each of which is independently
approximated by the HBW-OMP/OOMP approach to produce
a SR=11.53.
The resulting SNR of the approximated signal
is 29.12~dB with HBW-OMP and 30.15~dB with HBW-OOMP. 
Equivalent approximation quality 
is obtained with 
other realizations of the random process. For 
appreciation of this result the original signal 
was also approximated by HBW-OMP/OOMP, 
but without segmentation. 
In that case the quality corresponding to SR=11.53 is only
 $1.9\%$ higher (SNR=29.66~dB with HBW-OMP and SNR=30.73~dB 
with HBW-OOMP). The conclusion is that, even if the 
largest the segments the less the distortion 
for the same sparsity, applying HBW-OMP/OOMP on segments 
of a large signal may be still significantly 
more advantageous
 than the independent approximation of the blocks. Indeed, 
in this example the
 latter yields a SNR of 25~dB with OMP and 26.1~dB 
with OOMP.
\section{Conclusions} 
Cooperative greedy pursuit strategies 
for approximating a signal partition 
subjected to a global constraint on sparsity have been 
considered. The cooperation between 
partition units was realized in several ways:
i)By ranking the partition units for their sequential
stepwise approximation (HBW-OMP/OOMP) ii)By ranking the partition units for stepwise downgrading of the whole 
approximation (HBW-BOOMP) iii)By allowing for 
  downgrading of some partition units (donors)  and 
upgrading of another partitions units (receivers), 
to improve the approximation
 quality (HBW-SbR-OMP/OOMP). Comparisons of the OMP and OOMP 
criteria indicate that for the type of dictionaries 
and signals considered here, the OOMP criterion 
renders improvements of sparsity up to 20$\%$ 
in relation to the equivalent strategy involving 
OMP.\\
The HBW algorithms maintain the complexity comparable 
with the
totally independent approximation of each partition unit.
The examples presented here, using 
trigonometric dictionaries for achieving  high quality 
approximation of musics 
signals, provide a clear illustration of the 
benefits arising by the proposed strategies. Indeed, for
a piano melody 
the increment in sparsity is 128$\%$  higher than 
the sparsity produced by approximating each partition 
unit up to same quality. Comparison of trigonometric 
dictionaries results, against results 
obtained with trigonometric basis, give an improvement of 
equivalent order: 133$\%$. The mixed Cosine-Sine dictionary 
has yielded the best sparsity results, not
only for the test signals included in this paper but 
for a significant  number of
other signals, all in the category of melodic music.\\
Because the proposed HBW implementation of pursuit 
strategies 
requires storage of order $\Nq K$ ($\Nq$ being  the 
length of each block in the partition and $K$ 
the number of total atoms to approximate the whole signals)  
a simple procedure for processing large signal was proposed:
Previous to dividing the signal 
into independent segments, each of which to be approximated 
with the identical constraint of sparsity, 
the randomization of the 
partition units is advised.

\newpage
\appendix
\renewcommand{\theequation}{A.\arabic{equation}}
\setcounter{equation}{0}
\appendix{\bf{Appendix A}}
\begin{algorithm}[!pht]
\setstretch{1.8}
\refstepcounter{myalg}
\caption{Atom Selection Procedure via FFT}
\label{alellfft}
\begin{algorithmic}
\STATE{\bf{Input:}}\, Residue $\vR\cqs$. Dictionary 
Case (`Cos', `Sin' or mixed `Cos-Sin') and 
number of elements, $M$.
Sequence 
$\{s\{\qs\}_n\}_{n=1}^{M}$ if Case is `Cos' or `Sin', 
and $\{s^{\rc}\{\qs\}_n\}_{n=1}^{M}, \{s^{\rs}\{\qs\}_n\}_{n=1}^{M}$ if Case is `Cos-Sin'. Vector $\vWt\{\qs\}_{\kqs}$ to 
upgrade the sequences. Set of   
already selected atoms $\Gamma\cqs$ (sets 
 $\Gamma^{\rc}\cqs$  and $\Gamma^{\rs}\cqs$ with the atoms 
in each component if Case is `Cos-Sin').

\STATE{\bf{Output:}}\, Selected index $\ell\cqs$ 
for a potential new atom for block $\cqs$. 

\STATE{Cases single `Cos' or `Sin'}

\COMMENT{Call IPTrgFFT procedure, Algorithm \ref{AlIP},
to calculate inner products, and apply Algorithm 
\ref{alsnfft} to upgrade the sequence $\{s\{\qs\}_n\}_{n=1}^{M}$ with respect to vector $\vWt\{\qs\}_{\kqs}$}

$\vP$=IPTrgFFT($\vR\cqs, 2 M,$ Case),

\COMMENT{Select the index for a new 
potential atom for the block $\qs$, as below.}

\STATE{$\ds{{\ell{\cqs}=\argmax_{\substack{n=1,\ldots,M \\n \notin \Gamma \cqs}} \frac{|p\cqs(n)|}{(1-s\cqs_n)^{\frac{1}{2}}}}}$
}

\STATE{Case `Cos-Sin'}

\COMMENT{Same precedure as for the previous case but 
for both `Cos' and `Sin' cases},

\STATE{$\vP^{\rc}\cqs$=IPTrgFFT($\vRe\cqs,M$ `Cos');\quad
$\vP^{\rs}\cqs$=IPTrgFFT($\vRe\cqs,M,$ `Sin')}

\COMMENT{Apply Algorithm \ref{alsnfft} to upgrade the sequences $\{s^{\rc}\{\qs\}_n\}_{n=1}^{M}$ and $\{s^{\rs}\{\qs\}_n\}_{n=1}^{M}$ and selec the index}

Set $M \leftarrow \frac{M}{2}$

\STATE{$\ds{\ell^{\rc}{\cqs}=\argmax_{\substack{n=1,\ldots,M \\n \notin \Gamma^{\rc} \cqs}} \frac{\left|p^{\rc}\cqs(n)\right|}{(1-s^{\rc}\cqs_n)^{\frac{1}{2}}}}$;\quad
{$\ds{\ell^{\rs}{\cqs}=\argmax_{\substack{n=1,\ldots,M \\n \notin \Gamma^{\rs}\cqs}} \frac{\left|p^{\rs}\cqs(n)\right |}{(1-s^{\rs}\cqs_n)^{\frac{1}{2}}}}$}}

\COMMENT{Evaluate the maximum value, as below}

\STATE{${\ds{\mu^{\rc}{\cqs}=\frac{\left|p^{\rc}\cqs(\ell^{\rc}{\cqs})\right|}{(1-s^{\rc}\cqs_{\ell^{\rc}\cqs})^{\frac{1}{2}}}}}$};\quad
{${\ds{\mu^{\rs}{\cqs}=\frac{\left|p^{\rs}\cqs(\ell^{\rs}{\cqs})\right|}{(1-s^{\rs}\cqs_{\ell^{\rs}\cqs})^{\frac{1}{2}}}}}$}


\STATE{$\mu\cqs=\max(\mu^{\rc}\cqs,\mu^{\rs}\cqs)$}
\IF {$\mu= \mu^{\rc}$} 
\STATE{$\ell{\cqs}=\ell^{\rc}{\cqs}$}
\ELSE 
\STATE{$\ell{\cqs}=\ell^{\rs}{\cqs} + M$}
\ENDIF
\end{algorithmic}
\end{algorithm}

\begin{algorithm}[!pht]
\setstretch{1.7}
\refstepcounter{myalg}
\caption{Auxiliary Procedure via FFT}
\label{alsnfft}
\begin{algorithmic}

\STATE{\bf{Input:}}\, Vector $\vWt\{\qs\}_{\kqs}$. 
 Dictionary Case: `Cos', `Sin' 
or `Cos-Sin', and number of elements $M$. Sequence 
$\{s\{\qs\}_n\}_{n=1}^M$,
for Cases `Cos' and `Sin', or $\{s^{\rc}\cqs_n\}_{n=1}^M$, 
and $\{s^{\rs}\cqs_n\}_{n=1}^M$, 
for Case `Cos-Sin'. 

\STATE{\bf{Output:}}\, Upgraded sequence,  
$\{s\cqs_n\}_{n=1}^M$, for Cases `Cos' and `Sin', 
or $\{s^{\rc}\cqs_n\}_{n=1}^M$ and $\{s^{\rs}\cqs_n\}_{n=1}^M$ for Case `Cos-Sin'.

\STATE{Case `Cos' or `Sin'}

\COMMENT{Call IPTrgFFT procedure, Algorithm \ref{AlIP}, 
to calculate inner products with vector 
$\vWt\{\qs\}_{\kqs}$}

$\vP\cqs$=IPTrgFFT($\vWt\{\qs\}_{\kqs}, 2 M,$ Case)

\COMMENT{Upgrade the sequence, as below}
\FOR {$n=1:M$}
\STATE{$\ds{s\{\qs\}_n= s\{\qs\}_{n} + \left | p\cqs(n) \right|^2}$}
\ENDFOR

\STATE{Case `Cos-Sin'}

\COMMENT{Call IPTrgFFT procedure, Algorithm \ref{AlIP},
to calculate the inner products between 
vector $\vWt\{\qs\}_{\kqs}$ and each of the
 dictionary components: Cases `Cos' and `Sin'}

\STATE{$\vP^c$=IPTrgFFT($\vWt\{\qs\}_{\kqs}, M,$ `Cos')}\\

\STATE{$\vP^s$=IPTrgFFT($\vWt\{\qs\}_{\kqs}, M,$ `Sin')}\\

\COMMENT{Upgrade the sequences, as below}

\FOR {$n=1:\frac{M}{2}$}

\STATE{$\ds{s^{\rc}\{\qs\}_n= s^{\rc}\{\qs\}_{n} + \left |p^{\rc}
\cqs(n)\right|^2}$}

\STATE{$\ds{s^{\rs}\{\qs\}_n= s^{\rs}\{\qs\}_{n} + \left 
|p^{\rs}\cqs(n)\right|^2}$}
\ENDFOR
\end{algorithmic}
\end{algorithm}

\bibliographystyle{elsart-num}
\bibliography{revbib}
\end{document}